\newcommand{\figuresize}{\small}
\newcommand{\RR}{\mathbb{R}}
\newcommand{\NN}{\mathbb{N}}
\newcommand{\ZZ}{\mathbb{Z}}
\newcommand{\ignore}[1]{}
\newcommand{\vv}{\vec{v}}
\begin{document}

\title{Pattern overlap implies runaway growth in hierarchical tile systems}

\opt{normal}{
  \author{
    Ho-Lin Chen\thanks{National Taiwan University, Taipei, Taiwan, {\tt holinc@gmail.com}. Supported by NSC grant number 101-2221-E-002-122-MY3.}
    \and
    David Doty\thanks{California Institute of Technology, Pasadena, CA, USA, {\tt ddoty@caltech.edu}. The author was supported by the Molecular Programming Project under NSF grants 0832824 and 1317694 and by NSF grants CCF-1219274 and CCF-1162589, and by a Computing Innovation Fellowship under NSF grant 1019343.}
    \and
    J\'{a}n Ma\v{n}uch\thanks{University of British Columbia,
      Vancouver BC, Canada and Simon Fraser University, Burnaby BC,
      Canada, {\tt jmanuch@sfu.ca}. Supported by NSERC Discovery grant.}
    \and
    Arash Rafiey\thanks{Simon Fraser University, Burnaby, BC, Canada, {\tt arashr@sfu.ca}}
    \and
    Ladislav Stacho\thanks{Simon Fraser University, Burnaby, BC,
      Canada, {\tt lstacho@sfu.ca}. Supported by NSERC Discovery grant.}
  }
}

\opt{normal}{\date{}}

\maketitle

\begin{abstract}
  We show that in the hierarchical tile assembly model, if there is a producible assembly that overlaps a nontrivial translation of itself consistently (i.e.,~the pattern of tile types in the overlap region is identical in both translations), then arbitrarily large assemblies are producible.
  The significance of this result is that tile systems intended to controllably produce finite structures must avoid pattern repetition in their producible assemblies that would lead to such overlap.

  This answers an open question of Chen and Doty (\emph{SODA 2012}), who showed that so-called ``partial-order'' systems producing a unique finite assembly {\bf and} avoiding such overlaps must require time linear in the assembly diameter.
  An application of our main result is that any system producing a unique finite assembly is automatically guaranteed to avoid such overlaps, simplifying the hypothesis of Chen and Doty's main theorem.
\end{abstract}



\thispagestyle{empty}\newpage\setcounter{page}{1}

\section{Introduction}
\label{sec-intro}

Winfree's abstract Tile Assembly Model (aTAM)~\cite{Winfree98simulationsof} is a model of crystal growth through cooperative binding of square-like monomers called \emph{tiles}, implemented experimentally (for the current time) by DNA~\cite{WinLiuWenSee98, BarSchRotWin09}.
It models the potentially algorithmic capabilities of tiles that are designed to bind if and only if the total strength of attachment (summed over all binding sites, called \emph{glues} on the tile) is at least a threshold $\tau$, sometimes called the \emph{temperature}.
When glue strengths are integers and $\tau=2$, two strength 1 glues must cooperate to bind the tile to a growing assembly.
Two assumptions are key: 1) growth starts from a single \emph{seed} tile type, and 2) only individual tiles bind to an assembly.
We refer to this model as the \emph{seeded aTAM}.

While violations of these assumptions are often viewed as errors in implementation of the seeded aTAM~\cite{SchWin07, SchWin09}, relaxing them results in a different model with its own programmable abilities.
In the \emph{hierarchical} (a.k.a. \emph{multiple tile}~\cite{AGKS05}, \emph{polyomino}~\cite{Winfree06, Luhrs10}, \emph{two-handed}~\cite{TwoHandsBetterThanOne, DotPatReiSchSum10, TwoHandedNotIntrinsicallyUniversalconf}) \emph{aTAM}, there is no seed tile, and an assembly is considered producible so long as two producible assemblies are able to attach to each other with strength at least $\tau$, with all individual tiles being considered as ``base case'' producible assemblies.
In either model, an assembly is considered \emph{terminal} if nothing can attach to it; viewing self-assembly as a computation, terminal assembly(ies) are often interpreted to be the output.
See~\cite{DotCACM,PatitzSurvey12} for an introduction to recent theoretical work using these models.

As with other models of computation, in general it is considerably more difficult to prove negative results (\emph{limitations} on what a tile system can do) than to prove positive results.
A common line of inquiry aimed at negative results in tile self-assembly concerns the notion of ``pumping'': showing that a single repetition of a certain group of tiles implies that the same group can be repeated indefinitely to form an infinite periodic structure.


The \emph{temperature-1} problem in the seeded model of tile assembly concerns the abilities of tile systems in which every positive-strength glue is sufficiently strong to bind two tiles.
It may seem ``obvious'' that if two tile types repeat in an assembly, then a segment of tiles connecting them could be repeated indefinitely (``pumped'') to produce an infinite periodic path (since, at temperature 1, each tile along the segment has sufficient strength for the next tile in the segment to attach).
However, this argument fails if the attempt to pump the segment ``crashes'' into an existing part of the assembly.
It is conjectured~\cite{jLSAT1} that only finite unions of periodic patterns (so-called \emph{semilinear} sets) can be assembled at temperature 1 in the seeded model, but despite considerable investigation~\cite{ManStaSto10,RotWin00,temp1noIU,reifsongtemp1}, the question remains open.
If true, temperature-1 \emph{hierarchical} tile systems would suffer a similar limitation, due to a formal connection between producible assemblies in the seeded and hierarchical models\cite[Lemma 4.1]{TwoHandsBetterThanOne}.
It \emph{has} been established, using pumping arguments, that temperature-1 seeded tile systems are unable to simulate the dynamics of certain temperature-2 systems~\cite{temp1noIU}.

Moving to temperature 2, both models gain power to assemble much more complex structures; both are able to simulate Turing machines, for instance.
In a certain sense, the hierarchical model is at least as powerful as the seeded model, since every seeded tile system can be simulated by a hierarchical tile system with a small ``resolution loss'': each tile in the seeded system is represented by a $5 \times 5$ block of tiles in the hierarchical system~\cite[Theorem 4.2]{TwoHandsBetterThanOne}.

From this perspective, the main theorem of this paper, a negative result on hierarchical tile assembly that does not apply to seeded tile assembly, is somewhat surprising.
We show that hierarchical systems, of \emph{any} temperature, are forced to admit a sort of infinite ``pumping'' behavior if a special kind of ``pattern repetition'' occurs.
More formally, suppose that a hierarchical tile system $\calT$ is able to produce an assembly $\alpha_0$ such that, for some nonzero vector $\vv$, the assembly $\alpha_1 = \alpha_0 + \vv$ (meaning $\alpha_0$ translated by $\vv$) intersects $\alpha_0$, but the two translations agree on every tile type in the intersection (they are \emph{consistent}).
It is known that this implies that the union $\alpha_0 \cup \alpha_1$ is producible as well~\cite[Theorem 5.1]{DotyProducibility}.
Our main theorem, Theorem~\ref{thm:main}, shows that this condition implies that $\calT$ can produce arbitrarily large assemblies, answering the main open question of~\cite{DotyProducibility}.

\begin{figure}[htb]
\begin{center}
  \includegraphics[width=\textwidth]{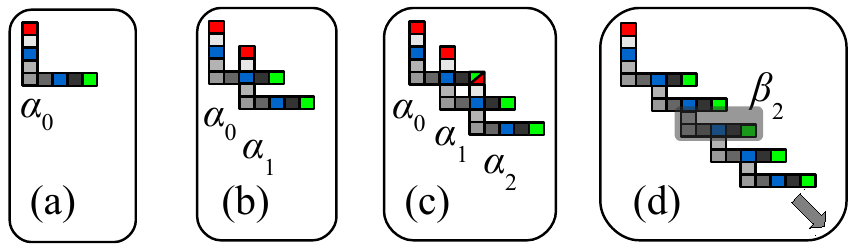}
  \vspace{-0.1in}
  \caption{\label{fig:pump} \figuresize
  Example of the main theorem of this paper.
  (a) A producible assembly $\alpha_0$. Gray tiles are all distinct types from each other, but red, green, and blue each represent one of three different tile types, so the two blue tiles are the same type.
  (b) By Theorem~\ref{thm-union-producible}, $\alpha_0 \cup \alpha_1$ is producible, where $\alpha_1 = \alpha_0 + (2,-2)$, because they overlap in only one position, and they both have the blue tile type there.
  (c) $\alpha_0$ and $\alpha_2$ both have a tile at the same position, but the types are different (green in the case of $\alpha_0$ and red in the case of $\alpha_2$).
  (d) However, a subassembly $\beta_i$ of each new $\alpha_i$ can grow, enough to allow the translated equivalent subassembly $\beta_{i+1}$ of $\alpha_{i+1}$ to grow from $\beta_i$, so an infinite structure is producible.
  }
\end{center}
\end{figure}

The assembly is not necessarily infinitely many translations of all of $\alpha_0$, since although $\alpha_0$ and $\alpha_1$ are consistent, which implies that $\alpha_1$ must be consistent with $\alpha_2 = \alpha_0 + 2 \vv$, it may be that $\alpha_0$ is not consistent with $\alpha_2$.
However, our proof shows that a subassembly $\beta_2$ of $\alpha_2$ can be assembled that is sufficient to grow another translated copy of $\beta_2$, so that the infinite producible assembly consists of infinitely many translations of $\beta_2$.
See Figure~\ref{fig:pump} for an example illustration.

An immediate application of this theorem is to strengthen a theorem of Chen and Doty~\cite{CheDot12}.
They asked whether every hierarchical tile system obeying a technical condition known as the \emph{partial order} property and producing a unique finite terminal assembly, also obeys the condition that no producible assembly is consistent with a translation of itself.
The significance of the latter condition is that the main theorem of~\cite{CheDot12} shows that systems satisfying the condition obey a time lower bound for assembly: they assemble their final structure in time $\Omega(d)$, where $d$ is the diameter of the final assembly.
Our main theorem implies that every system \emph{not} satisfying the condition must produce arbitrarily large assemblies and therefore cannot produce a unique finite terminal assembly.
Hence \emph{all} hierarchical partial order systems are constrained by this time lower bound, the same lower bound that applies to all seeded tile systems.
Thus hierarchical partial order systems, despite the ability to assemble many sub-assemblies of the final assembly in parallel, provably cannot exploit this parallelism to obtain a speedup in assembly time compared to the seeded model.

It is worthwhile to note that our main theorem does not apply to the seeded model.
For instance, it is routine to design a seeded tile system that assembles a unique terminal assembly shaped like a square, which uses the same tile type in the upper right and lower left corners of the square.
Translating this assembly to overlap those two positions means that this tile system satisfies the hypothesis of our main theorem.
Why does this not contradict the fact that this system, like all seeded systems, can be simulated by a hierarchical tile system at scale factor 5~\cite[Theorem 4.2]{TwoHandsBetterThanOne}, which would apparently satisfy the same consistent overlap condition?
The answer is that the hierarchical simulating system of~\cite{TwoHandsBetterThanOne} uses different $5 \times 5$ blocks to represent the same tile type from the seeded system, depending on the \emph{sides} of the tile that are used to bind in the seeded system.
Since the upper-right corner tile and lower-left corner tile in the seeded system must clearly bind using different sides, they are represented by different blocks in the simulating hierarchical system.
Hence in the hierarchical system, the terminal assembly does \emph{not} consistently overlap with itself.

Our argument proceeds by reducing the problem (via a simple argument) to a simple-to-state theorem in pure geometry.
That theorem's proof contains almost all of the technical machinery required to prove our main theorem.
Let $S_0$ be a discrete \emph{shape}: a finite, connected subset of $\Z^2$, and let $\vv \in \Z^2$ be a nonzero vector.
Let $S_1 = S_0 + \vv$ ($= \{\ p + \vv \ |\ p \in S_1 \ \}$), and let $S_2 = S_1 + \vv$.
The theorem states that $S_2 \setminus S_1$ (possibly a disconnected set) contains a connected component that does not intersect $S_0$.
This is clear when $\vv$ is large enough that $S_0 \cap S_2 = \emptyset$, but for the general case, we encourage the reader to attempt to prove it before concluding that it is obvious.
In Figure~\ref{fig:pump}, $S_2 \setminus S_1$ (referring respectively to the shapes of assemblies $\alpha_2$ and $\alpha_1$) contains two connected components, one on top and the other on bottom.
The top component intersects $S_0$, but not the bottom.

This problem is in turn reduced to a more technical statement about simple curves (continuous, one-to-one functions $\varphi:[0,1] \to \R^2$) whose intersection implies the shapes theorem. 
Although we need the curve theorem to hold only for polygonal curves on the integer grid $\Z^2$, the result holds for general curves, and we hope it may be useful in other contexts.


\section{Informal definition of the hierarchical tile assembly model}
\label{sec-tam-informal}

We give an informal sketch of the hierarchical variant of the abstract Tile Assembly Model (aTAM).
See Section \ref{sec-tam-formal} for a formal definition.


Let $\RR $, $\ZZ $, $\NN$ and $\ZZ^{+} $ denote the set of all real numbers, integers, non-negative integers and positive integers, respectively.
Given a set $S \subseteq \RR^{2} $ and a vector $\vv\in \RR^{2}$, let $S +
\vv = \{p + \vv :\: p\in S\}$.

A \emph{tile type} is a unit square with four sides, each consisting of a \emph{glue label} (often represented as a finite string). Each glue type is assigned a nonnegative integer \emph{strength}.
We assume a finite set $T$ of tile types, but an infinite number of copies of each tile type, each copy referred to as a \emph{tile}.
An \emph{assembly}
is a positioning of tiles on the integer lattice $\Z^2$; i.e., a partial function $\alpha:\Z^2 \dashrightarrow T$. 
Write $\alpha \sqsubseteq \beta$ to denote that $\alpha$ is a \emph{subassembly} of $\beta$, which means that $\dom\alpha \subseteq \dom\beta$ and $\alpha(p)=\beta(p)$ for all points $p\in\dom\alpha$. Given an assembly $\beta$ and a set $D\subseteq\dom\beta$, $\beta{\restriction_{D}}$ is a subassembly of $\alpha$ with $\dom(\beta{\restriction_{D}}) =  D$.

We abuse notation and take a tile type $t$ to be equivalent to the single-tile assembly containing only $t$ (at the origin if not otherwise specified).
Two adjacent tiles in an assembly \emph{interact} if the glue labels on their abutting sides are equal and have positive strength. 
Each assembly induces a \emph{binding graph}, a grid graph whose vertices are tiles, with an edge between two tiles if they interact.
The assembly is \emph{$\tau$-stable} if every cut of its binding graph has strength (the sum of the weights of the edges in the cut) at least $\tau$, where the weight of an edge is the strength of the glue it represents.

A \emph{hierarchical tile assembly system} (hierarchical TAS) is a pair $\calT = (T,\tau)$, where $T$ is a finite set of tile types and $\tau \in \N$ is the temperature.
An assembly is \emph{producible} if either it is a single tile from $T$, or it is the $\tau$-stable result of translating two producible assemblies without overlap.
The restriction on overlap is a model of a chemical phenomenon known as \emph{steric hindrance}~\cite[Section 5.11]{WadeOrganicChemistry91} or, particularly when employed as a design tool for intentional prevention of unwanted binding in synthesized molecules, \emph{steric protection}~\cite{HellerPugh1,HellerPugh2,GotEtAl00}.
An assembly $\alpha$ is \emph{terminal} if for every producible assembly $\beta$, $\alpha$ and $\beta$ cannot be $\tau$-stably attached.
If $\alpha$ can grow into $\beta$ by the attachment of zero or more assemblies, then we write $\alpha \to \beta$. Our definitions imply only finite assemblies are producible.
Figure~\ref{fig:2handed} shows an example of hierarchical attachment.

\begin{figure}[!ht]
  \vspace{-0.1in}
  \begin{center}
    \includegraphics[width=4in]{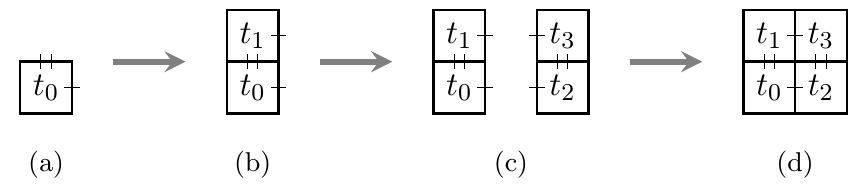}
  \caption{\figuresize Typical example of hierarchical assembly, at temperature $\tau = 2$. The segments between tiles represent the bonds, the number of segments encodes the strength of the bond (here,~$1$ or~$2$). In the seeded, single tile model with seed $\sigma = t_0$, the assembly at step (b) would be terminal.}
  \label{fig:2handed}
  \end{center}
  \vspace{-0.1in}
\end{figure}



\section{Main result}

\newcommand{\crv}[1]{\varphi_{#1}}

\newcommand{\crvt}[2]{\varphi_{#1}^{\rightarrow #2}}

\newcommand{\psit}[1]{\psi^{\rightarrow #1}}

\newcommand{\phit}[1]{\varphi^{\rightarrow #1}}

Section~\ref{subsec:curves} proves a theorem about curves in $\R^2$ (Theorem~\ref{t:k-lines}) that contains most of the technical detail required for our main theorem.
Section~\ref{subsec:shapes} uses Theorem~\ref{t:k-lines} to prove a geometrical theorem about shapes in $\Z^2$ (Theorem~\ref{thm:shapes}).
Section~\ref{subsec:assembly} uses Theorem~\ref{thm:shapes} to prove our main theorem (Theorem~\ref{thm:main}).

The high-level intuition of the proofs of these results is as follows (described in reverse order).
Theorem~\ref{thm:main} intuitively holds by the following argument.
If a producible assembly $\alpha_0$ is consistent with its translation $\alpha_1 = \alpha+\vv$ by some nonzero vector $\vv \in \Z^2$, then Theorem~\ref{thm:shapes} implies that some portion $C$ of $\alpha_2 = \alpha_0 + 2 \vv$ does not intersect $\alpha_0$, and $C$ is furthermore assemblable from $\alpha_1$ (by Theorem~\ref{thm-union-producible}).
Therefore, it is assemblable from $\alpha_0 \cup \alpha_1$ (since $\alpha_2$ is consistent with $\alpha_1$, and this part $C$ of $\alpha_2$ does not intersect $\alpha_0$, ruling out inconsistency due to $C$ clashing with $\alpha_0$).
Thus $\alpha_1 \cup C$ is producible and overlaps consistently with its translation by $\vv$.
Since $C$ is nonempty, $\alpha_1 \cup C$ is strictly larger than $\alpha_0$.
Iterating this argument shows that arbitrarily large producible assemblies exist.

Why does Theorem~\ref{thm:shapes} hold?
If it did not, then every connected component $C_i$ of $S_2 \setminus S_1$ would intersect $S_0$ at a point $p_i$.
Since $p_i \in S_0$, $p_i + 2\vv \in S_2$.
Since $p_i \in S_2$, there is a path $q_i$ from $p_i$ to $p_i + 2\vv$ lying entirely inside of $S_2$.
But Corollary~\ref{cor:one-line} implies that $q_i$ must intersect $q_i - \vv$, which, being a path inside of $S_1$, implies that $p_i + 2 \vv$ is in a different connected component $C_{i+1}$ of $S_2 \setminus S_1$.
But since $C_{i+1}$ also intersects $S_0$, there is a point $p_{i+1}$ in this intersection, and there is a curve $\varphi_{i}$ from $p_i + 2\vv$ to $p_{i+1}$.
Since every connected component of $S_2 \setminus S_1$ intersects $S_0$, we can repeat this argument until we return to the original connected component $C_i$.
But then the various curves $\varphi_i$ defined within each component will satisfy the conditions of Theorem~\ref{t:k-lines}, a contradiction.

\subsection{A theorem about curves}
\label{subsec:curves}

\begin{definition}
Given a nonzero vector $\vec v\in\RR^2$ and a point $p\in\RR^2$ the $\vec v$-\emph{axis through} $p$,  denoted as $L_{\vec v,p}$, is the line parallel to $\vec v$ through $p$.
\end{definition}

\begin{definition}
Let $\varphi:[0,1]\to\RR^2$ be continuous one-to-one mapping. Then $\varphi([0,1])$ is called a \emph{simple} (non-self-intersecting) curve from $\varphi(0)$ to $\varphi(1)$. If $\varphi:[0,1]\to\RR^2$ is continuous with $\varphi(0)=\varphi(1)$ and one-to-one on $[0,1)$, then $\varphi([0,1])$ is called a \emph{simple closed} curve.
\end{definition}

Obviously, any curve $\varphi([0,1])$ from $\varphi(0)$ to $\varphi(1)$ (being a subset of the plane) can be considered also as a curve from $\varphi(1)$ to $\varphi(0)$. Therefore, for the sake of brevity, we sometimes denote this curve simply by $\varphi$ and say that $\varphi$ connects points $\varphi(0)$, $\varphi(1)$. If $0 \leq t_1 \leq t_2 \leq 1$, then $\varphi([t_1,t_2])$ is a simple curve as well. If $\varphi_1$ and $\varphi_2$ are simple non-closed curves such that $\varphi_1 \cap \varphi_2 = \{ \varphi_1(1)\} = \{\varphi_2(0) \}$ then their concatenation $\varphi_1 \oplus \varphi_2$, defined by $(\varphi_1 \oplus \varphi_2)(t) = \varphi_1(2t)$ if $t \leq \frac{1}{2}$ and $(\varphi_1 \oplus \varphi_2)(t) = \varphi_2(2(t - \frac{1}{2}))$ otherwise, is also a simple curve (closed if $\varphi_2(1) = \varphi_1(0)$).

\begin{definition}
Given a subset of a plane $A\subseteq\RR^2$ and a vector $\vec v\in\RR^2$, the \emph{shift (or translation) of $A$ by $\vec v$}, denoted by $A+\vec v$, is the set
$
A+\vec v=\{p+\vec v : p\in A\}.
$
\end{definition}


The following lemma states that if a curve does not intersect a translation of itself, then it also does not intersect any integer multiples of the same translation.
A similar lemma was proven independently by Demaine, Demaine, Fekete, Patitz, Schweller, Winslow, and Woods~\cite{OneTile} for shapes instead of curves.

\begin{lemma}\label{l:one-stripe}
Consider points $p_1,p_2\in\RR^2$, nonzero vector $\vec v\in\RR^2$ and a simple curve $\varphi$ connecting $p_1$ and $p_2$ ($\varphi$ may be closed if $p_1=p_2$) such that $\varphi\cap(\varphi+\vec v)=\emptyset$. Let $\phit{k}=\varphi + k\vec v,\ k\in\ZZ$. Then all $\phit{k}$'s are mutually disjoint.
\end{lemma}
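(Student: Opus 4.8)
I would argue by minimal counterexample. Since $\varphi^{\rightarrow i}\cap\varphi^{\rightarrow j}=(\varphi\cap(\varphi+(j-i)\vv))+i\vv$, the conclusion is equivalent to $\varphi\cap(\varphi+k\vv)=\emptyset$ for every $k\neq 0$, and the base instance $|k|=1$ is exactly the hypothesis. So suppose, for contradiction, that $m\geq 2$ is the smallest integer with $\varphi\cap(\varphi+m\vv)\neq\emptyset$ (using the $\vv\to-\vv$ symmetry we may take it positive). Then $\varphi\cap(\varphi+j\vv)=\emptyset$ for $1\leq|j|\leq m-1$, so no two points of $\varphi$ differ by $j\vv$ with $0<|j|<m$.

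Next I extract a clean one-dimensional object. Fix a point $x=\varphi(a)=\varphi(b)+m\vv$ in the intersection. Among all sub-arcs $\varphi([s,t])$ with $s<t$ whose endpoints differ by a \emph{nonzero} integer multiple of $\vv$ (a nonempty family, containing $\varphi([\min(a,b),\max(a,b)])$, and where by minimality of $m$ and compactness of $\varphi$ that multiple can only be $\pm m\vv$), choose $A=\varphi([s_0,t_0])$ with $t_0-s_0$ minimal; reversing the parametrisation of $\varphi$ if necessary, assume $\varphi(t_0)-\varphi(s_0)=m\vv$, and write $w=\varphi(s_0)$, so $A$ runs from $w$ to $w+m\vv$. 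A short argument from minimality shows that any coincidence $y\in A\cap(A+n\vv)$ with $n\neq0$ produces a sub-arc of length $\le t_0-s_0$ with endpoints differing by $\pm n\vv$, forcing it to be all of $[s_0,t_0]$ and hence $n=\pm m$; consequently $A\cap(A+n\vv)=\emptyset$ for $n\notin\{0,m,-m\}$, while $A\cap(A+m\vv)=\{w+m\vv\}$ and $A\cap(A-m\vv)=\{w\}$. Therefore $\Delta:=\bigcup_{k\in\ZZ}(A+km\vv)$ is a chain of arcs in which consecutive arcs share exactly one endpoint and non-consecutive arcs are disjoint; since the arcs march to infinity in the $\vv$-direction, $\Delta$ is a properly embedded line in $\RR^2$, invariant under $\tau:=$ translation by $m\vv$.

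Now I invoke the Jordan curve theorem. The line $\Delta$ separates $\RR^2$ into two components, and $\tau$ fixes each (a point far from the axis $L_{\vv,w}$ lies in the same component as its $\tau$-image). The same bookkeeping with the sets $A\cap(A+n\vv)$ shows that $\Delta_0:=\Delta,\ \Delta_1:=\Delta+\vv,\ \dots,\ \Delta_{m-1}:=\Delta+(m-1)\vv$ are $m$ pairwise disjoint properly embedded lines, cyclically permuted by $T:=$ translation by $\vv$ (with $T(\Delta_{m-1})=\Delta+m\vv=\Delta_0$). Their union cuts $\RR^2$ into $m+1$ linearly ordered regions $G_0,\dots,G_m$ (disjoint properly embedded lines in the plane are ``parallel''), and $T$ permutes the $G_i$ as an automorphism of the path $G_0\!-\!G_1\!-\!\cdots\!-\!G_m$, hence acts as the identity or as the reversal $G_i\leftrightarrow G_{m-i}$. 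The identity is impossible, as it would make $T$ fix each $\Delta_i$ setwise, contradicting $T(\Delta_i)=\Delta_{i+1}$. If $T$ acts as the reversal and $m$ is odd, it fixes the central line setwise, the same contradiction; if $m\geq4$ is even then $T^2$ fixes every $\Delta_i$ setwise while sending $\Delta_i$ to $\Delta_{i+2}$, contradicting disjointness of $\Delta_0$ and $\Delta_2$. This leaves the one genuinely delicate case, $m=2$.

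For $m=2$, $T$ maps the middle region $G_1$ — an open band whose boundary lines are $\Delta_0$ and $\Delta_1=\Delta_0+\vv$ — onto itself, interchanging the two boundary lines; moreover $T$ is orientation-preserving and fixed-point-free. Pass to the quotient $Y:=G_1/\langle T^2\rangle$: since $T^2$ (translation by $2\vv$) acts freely, properly discontinuously, and orientation-preservingly on the simply connected surface $G_1$, the quotient $Y$ is an open cylinder, and $T$ descends to an orientation-preserving \emph{involution} $\bar T$ of $Y$ that swaps its two ends. Every such involution has a fixed point (up to conjugacy it is the half-turn $(\theta,t)\mapsto(-\theta,-t)$ on $S^1\times\RR$). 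A lift of a fixed point of $\bar T$ is a point $p\in G_1$ with $T(p)=T^{2k}(p)$ for some $k\in\ZZ$, i.e. a fixed point of the translation by the \emph{nonzero} vector $(1-2k)\vv$ — impossible. This contradiction completes the argument. The step I expect to require the most care is this last one, namely the supporting topology: that disjoint properly embedded lines in $\RR^2$ are linearly ordered and that $T$ acts on the regions as a path automorphism, that $Y$ is a cylinder, and that an orientation-preserving end-swapping involution of a cylinder must have a fixed point — together with the structural lemma on $A\cap(A+n\vv)$ that makes $\Delta$ a simple line in the first place.
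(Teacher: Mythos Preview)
Your argument is essentially correct, but it takes a dramatically heavier route than the paper's proof. A small imprecision: when you pick the minimal sub-arc $A$, its endpoint difference need not be exactly $\pm m\vv$ (it is only $\pm n\vv$ for some $n\ge m$); however, your subsequent structural analysis of $A\cap(A+j\vv)$ depends only on the minimality of the \emph{sub-arc}, not on $n=m$, so the rest goes through with $n$ in place of $m$. The topological ingredients you flag --- that disjoint properly embedded lines cut $\RR^2$ into a path of regions, that $G_1/\langle T^2\rangle$ is an open annulus, and that an orientation-preserving end-swapping involution of a cylinder has a fixed point --- are all true and fillable (e.g.\ the last via Lefschetz on the two-point compactification $S^2$), so there is no genuine gap.

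The paper's proof is far more elementary and sidesteps the entire Jordan/surface-classification/fixed-point apparatus, as well as the delicate $m=2$ case. Its single idea: project $\varphi$ onto the direction \emph{perpendicular} to $\vv$; by the extreme value theorem $\varphi$ attains a minimum and maximum perpendicular distance at points $p_{\min},p_{\max}$, and the sub-arc $\psi\subseteq\varphi$ between them spans the full width of the strip containing all translates $\varphi+k\vv$. Each translate $\psi+k\vv$ then separates that strip into a left region $L_k$ and a right region $R_k$; from $\varphi\cap(\varphi+\vv)=\emptyset$ one reads off the nesting $R_{k+1}\subseteq R_k$ and $L_{k-1}\subseteq L_k$, which immediately gives $\varphi^{\to k}\cap\varphi^{\to\ell}=\emptyset$ for all $k\ne\ell$ in one stroke. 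So where you build a $\vv$-periodic separating line $\Delta$ from a carefully extracted minimal sub-arc and then analyze a group action on the complementary regions, the paper gets its separators for free from the extremal-width sub-arc, with no case analysis and nothing beyond the intermediate/extreme value theorems.
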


\begin{proof}
To every point of $\varphi$ we can assign ``relative distance'' $d$ from the line $L_{\vec v,p_1}$---positive for points left to the line and negative for points right to the line (with respect to $\vec v$). Since the function $d \circ \varphi : [0,1]\to\RR$ is continuous, by the extreme value theorem it attains both its minimum $d_\mathrm{min}$ and maximum $d_\mathrm{max}$.

If $d_\mathrm{min}=d_\mathrm{max}$ then $\varphi$ is just a line segment on the line $L_{\vec v,p_1}$ with a length less than $|\vec v|$ and the statement of the lemma holds true.

If $d_\mathrm{min}<d_\mathrm{max}$, let
$T_\mathrm{min} =\{t\in[0,1] : d\circ \varphi(t) = d_\mathrm{min}\}$ and
$T_\mathrm{max} =\{t\in[0,1] : d\circ \varphi(t) = d_\mathrm{max}\}$.
Since both $T_\mathrm{min}$ and $T_\mathrm{max}$ are closed and non-empty, we can take $t_\mathrm{min}\in T_\mathrm{min}$ and $t_\mathrm{max}\in T_\mathrm{max}$ such that
$d_\mathrm{min} < d\circ\varphi(t) < d_\mathrm{max}$ for every
$t\in(\min\{t_\mathrm{min}, t_\mathrm{max}\}, \max\{t_\mathrm{min}, t_\mathrm{max}\})$. Denote $p_\mathrm{min}=\varphi(t_\mathrm{min})$ and $p_\mathrm{max}=\varphi(t_\mathrm{max})$. All curves $\phit{k}$, $k\in\ZZ$, lie within the stripe between lines $L_{\vec v,p_\mathrm{min}}$ and $L_{\vec v,p_\mathrm{max}}$. Denote
$\psi=\varphi([\min\{t_\mathrm{min}, t_\mathrm{max}\}, \max\{t_\mathrm{min}, t_\mathrm{max}\}])$ (a simple curve connecting $p_\mathrm{min}$ and $p_\mathrm{max}$) and let $\psit{k}=\psi+k\vec v$, $k\in\ZZ$, be the corresponding shifts of $\psi$.

Since $\psit{k}$ meets neither $L_{\vec v,p_\mathrm{min}}$ nor $L_{\vec v,p_\mathrm{max}}$ at any point except its end-points, it splits the stripe into two disjoint regions---left and right (with respect to vector $\vec v$)---let us denote the left region by $L_k$ and the right one by $R_k$.

Since $\varphi \cap (\varphi + \vec v) = \emptyset $, we have for
every $k\in \ZZ $, $\psit{k}\cap \phit{k + 1}\subseteq \phit{k}\cap \phit{k + 1} = \emptyset $. Since the point $p_\mathrm{min}+(k+1)\vec v\in\phit{k+1}$ lies in $R_k$ and $\phit{k+1}\cap\psit{k}=\emptyset$, the whole curve $\varphi_{k+1}$ lies in $R_k$. Hence $\psit{k+1}\subseteq R_k$ and similarly $\psit{k-1}\subseteq\phit{k-1}\subseteq L_k$. This yields $R_{k+1}\subseteq R_k$ and $L_{k-1}\subseteq L_k$ and consequently $R_\ell\subseteq R_k$ and $L_k\subseteq L_\ell$ for any $k\le \ell$, $k,\ell\in\ZZ$.

Consider now any $k<\ell$, $k,\ell\in\ZZ$. If $\ell=k+1$ then $\phit{k} \cap \phit{\ell} = \emptyset$ by the  assumption of the lemma. If $\ell>k+1$ then $\phit{k} \subseteq L_{k+1}$ and $\phit{\ell} \subseteq R_{\ell-1} \subseteq R_{k+1}$, i.e., $\phit{k}$ and $\phit{\ell}$ are disjoint.
\end{proof}

The following corollary of Lemma~\ref{l:one-stripe} shows that if a curve is translated by a vector $\vv$, and the vector between its start and end points is an integer multiple of $\vv$, then the curve must intersect its translation by $\vv$.

\begin{cor}\label{cor:one-line}
Consider an integer $n\geq 1$, a point $p\in\RR^2$ and a nonzero vector $\vec v\in\RR^2$. Let $\varphi$ be a simple curve connecting $p$ and $p + n\vec v$. Then $\varphi$ intersects its translation by $\vec v$.
\end{cor}
\begin{proof}
Assume for the sake of contradiction that $\varphi$ and $\varphi +
\vec v$ do not intersect. By Lemma~\ref{l:one-stripe} all curves $\varphi+n\vec v$, $n\in\NN$, are mutually disjoint but $(p+n\vec v)\in\varphi \cap (\varphi + n \vv)$---a contradiction.
\end{proof}

The assumption that the vector from the start point to the end point of the curve $\varphi$ is an integer multiple of the vector $\vec v$ is essential in Corollary~\ref{cor:one-line}. The following example provides a general construction of a curve $\varphi\subseteq\RR^2$ connecting points $p$ and $p+x\vec v$ such that
$\varphi\cap(\varphi+\vec v)=\emptyset$, where $\vec 0\neq\vec v\in\RR^2$ and $x\in\RR\setminus\ZZ$, $|x|>1$. Note that for $|x|<1$ the line segment from $p$ to $p+x\vec v$ does not intersect its shift by $\vec v$.

\begin{example}
For simplicity assume that $p=(0,0)$ and $\vec v=(1,0)$. Let $n = \lfloor x \rfloor$, $y = x - n$ and choose any $\varepsilon>0$.

Let $\mu$ denote the line segment (simple curve) from $(0,0)$ to $(y,
n\varepsilon)$ and $\nu$ denote the line segment from $(y,
n\varepsilon)$ to $(1, -\varepsilon)$. Denote $\mu_k=\mu+k(1,
-\varepsilon)$ and $\nu_k=\nu+k(1, -\varepsilon)$ for $k\in\ZZ$. Then
let $\varphi=\mu_0 \oplus \nu_0 \oplus \dots \oplus \mu_{n-1} \oplus
\nu_{n-1} \oplus \mu_n$ be the desired curve. Figure~\ref{f:non-integer}
shows an example of this construction for $x = 3.6$. Note that $\varphi $ starts
and ends on the $x$-axis and that $\varphi + \vv $ does not intersect
$\varphi $ since for each stripe between $x = i$ and $x = i + 1$, $i = 1,\dots,n$, the
part of $\varphi + \vv $ in this stripe lies above the part of $\varphi
$ in the same stripe (shifted up by $\varepsilon $).

\begin{figure}[h!]
\centering
\includegraphics[width=\textwidth]{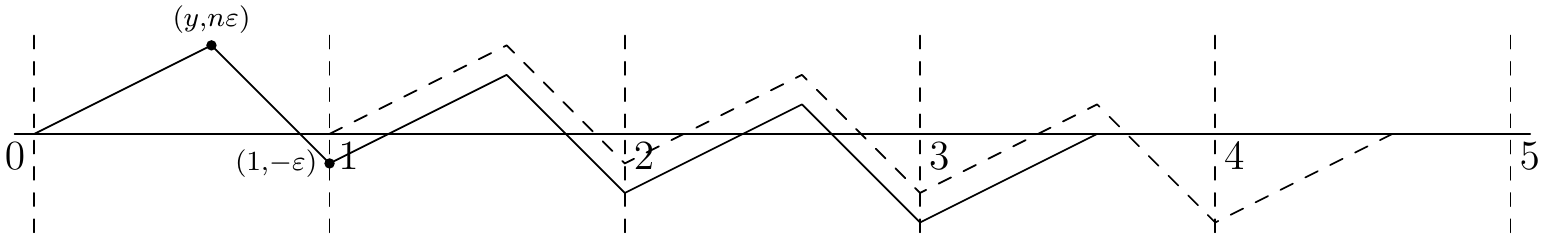}
\caption{An example of a curve $\varphi$ from $(0,0)$ to $(3.6,0)$
  (solid) that does not intersect its shift $\varphi +  (1,0)$ (dashed)}
\label{f:non-integer}
\end{figure}
\end{example}


Note that by Lemma \ref{l:one-stripe} all shifts $\crvt{}{\ell} = \varphi+\ell\vec v,\ \ell\in\ZZ$ are mutually disjoint.

The following theorem is quite technical to state.
Informally, it concerns a finite set of non-intersecting curves $\crv{1},\ldots,\crv{k}$ and a vector $\vv$ of the following form.
Each curve connects two points in the plane, subject to the condition that the end point of $\crv{i}$ is the start point of $\crv{i+1}$ translated by a positive integer multiple of $\vv$, with $\crv{k+1} = \crv{1}$.
See Figure~\ref{f:example-multiple-curves}(a) for an example.
An alternative way to think of these curves is as a single ``mostly continuous'' simple closed curve, with $k$ discontinuities allowed, where each discontinuity is of the form ``jump backwards by some positive integer multiple of $\vv$.''
The theorem states that this curve must intersect its translation by $\vv$.

\begin{theorem}\label{t:k-lines}
Let $k \in \Z^+$, let $p_1,\dots,p_{k} \in \RR^2$ be points, let $n_1,\dots,n_k \in \Z^+$, and let $\vec v\in\RR^2$ be a nonzero vector.
Then there do not exist curves $\crv{1},\dots,\crv{k}:[0,1] \to \R^2$ satisfying the following conditions:
\begin{enumerate}
\item \label{cond-start-end-points} $\crv{i}$ is a simple curve from
  $p_i$ to $(p_{i+1}+n_{i+1}\vec v)$, for every $1\leq i\leq k$, where $p_{k+1}=p_1$ and $n_{k+1}=n_1$,
\item \label{cond-no-intersect-translation} $\crv{i} \cap
  (\crv{i}+\vec v)=\emptyset$, for every $1\leq i\leq k$,
\item \label{cond-no-intersect-other-curves} $\crv{i} \cap \crv{j} =
  \emptyset$, for every $1\leq i < j\leq k$.
\end{enumerate}
\end{theorem}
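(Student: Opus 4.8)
The plan is to argue by contradiction: assume curves $\crv{1},\dots,\crv{k}$ satisfying conditions \ref{cond-start-end-points}--\ref{cond-no-intersect-other-curves} exist, and derive a contradiction from the fact that (by Lemma~\ref{l:one-stripe}) each $\crv{i}$ has \emph{all} its integer $\vv$-translates mutually disjoint. The key idea is to assemble the pieces $\crv{i}$, together with suitable translates of them, into a single simple closed curve $\Gamma$ in the plane, and then compare $\Gamma$ with its translate $\Gamma+\vv$. The gluing works because the endpoint of $\crv{i}$ is $p_{i+1}+n_{i+1}\vv$, which is the startpoint of $\crv{i+1}$ translated by the \emph{forward} integer multiple $n_{i+1}\vv$; so if we translate the whole chain $\crv{i+1},\crv{i+2},\dots$ appropriately, consecutive endpoints match up. Concretely, first I would set $N_i = n_1 + n_2 + \dots + n_i$ (with $N_0 = 0$) and consider the shifted curves $\crv{i}^{\rightarrow N_{i-1}} := \crv{i} + N_{i-1}\vv$; then the endpoint of $\crv{i}^{\rightarrow N_{i-1}}$ is $p_{i+1} + n_{i+1}\vv + N_{i-1}\vv = p_{i+1} + (N_i - n_i + n_{i+1})\vv$. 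This doesn't quite telescope, so the indexing will need care — the natural bookkeeping is to walk the chain \emph{backwards}, or to shift $\crv{i}$ by $M_i\vv$ where $M_i$ is chosen so that the endpoint of the shifted $\crv{i}$ coincides with the startpoint of the shifted $\crv{i+1}$; this forces $M_i = M_{i+1} - n_{i+1}$, i.e. $M_i = M_1 - (n_2 + \dots + n_{i+1})$ for a reference value, and going around the full loop picks up a total shift of $-(n_1 + \dots + n_k)\vv =: -n\vv$ with $n \geq k \geq 1$. The upshot: the concatenation $\Psi := \bigoplus_{i} (\crv{i} + M_i\vv)$ is a curve from some point $q$ to $q - n\vv$.

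Now $\Psi$ need not be simple: although condition~\ref{cond-no-intersect-other-curves} says the \emph{unshifted} pieces are disjoint, the shifted pieces $\crv{i}+M_i\vv$ may cross. This is where the main work lies, and I expect it to be \textbf{the main obstacle}. The resolution is the standard topological fact that a non-simple curve from $q$ to $q-n\vv$ contains a \emph{simple} subcurve from $q$ to $q-n\vv$ (extract a simple arc between the two endpoints, e.g. by the usual ``loop-erasing'' argument, which works for arbitrary curves — take $t^* = \sup\{t : \Psi(t) = q\}$ and $s^* = \inf\{t > t^* : \Psi(t) = \Psi(u)\text{ for some }u \geq \text{(next return)}\}$, iterating, or more cleanly invoke that any path-connected Hausdorff space containing two points contains an arc between them). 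Call this simple subcurve $\Phi$; it still connects $q$ and $q - n\vv$. By Corollary~\ref{cor:one-line} (applied with the vector $-\vv$, noting $q - n\vv = q + n(-\vv)$), $\Phi$ must intersect $\Phi + (-\vv)$, equivalently $\Phi + \vv$ intersects $\Phi$.

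To finish, I would trace this intersection back to a violation of a hypothesis. A point of $(\Phi+\vv) \cap \Phi$ lies on $(\crv{i}+M_i\vv+\vv) \cap (\crv{j}+M_j\vv)$ for some $i,j$. If $i = j$, we can pull out the shift $M_i\vv$ to get a point of $(\crv{i}+\vv)\cap\crv{i}$, contradicting condition~\ref{cond-no-intersect-translation}. If $i \neq j$, we get a point of $(\crv{i} + (M_i - M_j + 1)\vv) \cap \crv{j}$; since $\crv{i}$ and $\crv{j}$ are disjoint (condition~\ref{cond-no-intersect-other-curves}), the shift $m := M_i - M_j + 1$ is nonzero, and Lemma~\ref{l:one-stripe} applied to $\crv{j}$ would force $\crv{j}$ and $\crv{j}+m\vv$ disjoint only if $\crv{j} \cap (\crv{j}+\vv) = \emptyset$ — which is exactly condition~\ref{cond-no-intersect-translation} — wait, more carefully: Lemma~\ref{l:one-stripe} says if $\crv{j}\cap(\crv{j}+\vv)=\emptyset$ then \emph{all} translates $\crv{j}+m\vv$ ($m \in \ZZ\setminus\{0\}$) miss $\crv{j}$; but we have just exhibited $\crv{i}+m\vv$ meeting $\crv{j}$, not $\crv{j}+m\vv$ meeting $\crv{j}$, so this case needs the slightly stronger observation that the stripe structure from the proof of Lemma~\ref{l:one-stripe} nests \emph{all} the curves $\crv{1}+\ZZ\vv, \dots, \crv{k}+\ZZ\vv$ coherently. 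I would therefore strengthen the extraction step: prove, along the lines of Lemma~\ref{l:one-stripe}'s proof, that the entire family $\{\crv{i} + m\vv : 1 \leq i \leq k,\ m \in \ZZ\}$ is \emph{pairwise disjoint} (using conditions \ref{cond-no-intersect-translation} and \ref{cond-no-intersect-other-curves}), which immediately both makes $\Psi$ itself simple (eliminating the loop-erasing step entirely) and supplies the final contradiction with Corollary~\ref{cor:one-line}. This global-nesting lemma — generalizing Lemma~\ref{l:one-stripe} from one curve to the whole $\ZZ$-indexed family of $k$ curves — is, I expect, the real technical heart of the argument, and deserves to be isolated as its own lemma before Theorem~\ref{t:k-lines} is proved.
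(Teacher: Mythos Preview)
Your proposal has a genuine gap at exactly the point you flag as the technical heart: the ``global-nesting lemma'' asserting that the entire family $\{\crv{i} + m\vv : 1 \leq i \leq k,\ m \in \ZZ\}$ is pairwise disjoint is \emph{false} under conditions~\ref{cond-no-intersect-translation} and~\ref{cond-no-intersect-other-curves} alone. A counterexample: with $\vv = (1,0)$, take $\crv{1}$ to be the segment from $(0,0)$ to $(0,1)$ and $\crv{2}$ the segment from $(5,0)$ to $(5,1)$; both conditions hold, yet $\crv{1} = \crv{2} - 5\vv$. So the stripe argument of Lemma~\ref{l:one-stripe} does not extend to several curves in the way you hope --- that argument nests the translates of a \emph{single} curve, and gives no control over where an unrelated second curve sits among them. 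Worse, even if you also assume condition~\ref{cond-start-end-points}, the family is still not pairwise disjoint: $\crv{1}(1) = p_2 + n_2\vv = (\crv{2} + n_2\vv)(0)$, so $\crv{1}$ and $\crv{2} + n_2\vv$ already share an endpoint, and nothing prevents them from crossing elsewhere. (Assuming all three conditions the lemma is of course \emph{vacuously} true, since the theorem says no such curves exist --- but that is circular.)

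This also blocks your fallback route through loop-erasure: once you extract a simple arc $\Phi \subseteq \Psi$ and Corollary~\ref{cor:one-line} hands you a point of $(\crv{i} + M_i\vv) \cap (\crv{j} + (M_j \pm 1)\vv)$ with $i \neq j$, there is nothing in the hypotheses to contradict. The paper resolves this not by a global disjointness statement but by \emph{induction on $k$}: it locates the \emph{first} point along $\crv{1}$ that meets any translate $\crv{m} + \ell\vv$ with $m \geq 2$, $\ell \geq 0$ --- such a point exists precisely because of the shared endpoint just noted --- and splices the initial segment of $\crv{1}$ onto the terminal segment of that translate to form a new curve $\psi$. Together with suitable translates of $\crv{M+1},\ldots,\crv{k}$, this yields a strictly smaller instance of the same problem. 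The real work is then verifying that $\psi \cap (\psi + \vv) = \emptyset$, which a four-case check handles; the ``first intersection'' choice of the splice point is exactly what makes that check succeed.
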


\begin{proof}
By induction on $k$. The base case $k=1$ immediately follows by Corollary~\ref{cor:one-line}.

Intuitively, the inductive case will show that if we suppose, for the sake of contradiction, that $k$ curves exist satisfying the conditions, then we can find a common point of intersection between two of their integer translations by $\vv$, and we can connect two subcurves of these translations to create a set of $k' < k$ curves also satisfying the hypothesis of the theorem, without introducing an intersection.
Figure~\ref{f:example-multiple-curves} shows an example of three curves being reduced to two.
The new curves will simply be $k' - 1$ translations of some of the original $k$ curves (which already satisfy the conditions by hypothesis), together with one new curve $\psi$, so our main task will be to show that $\psi$, in the presence of the other pre-existing curves, satisfies the three conditions.

More formally, let $k>1$ and suppose the theorem holds for all integers $0 < k' < k$. Assume for the sake of contradiction that there are curves $\crv{1},\ldots,\crv{k}$ satisfying conditions~\ref{cond-start-end-points},~\ref{cond-no-intersect-translation}, and~\ref{cond-no-intersect-other-curves}, and define $\crvt{m}{\ell} = \crv{m} + \ell \vv$ for all $m \in\{1,\ldots,k\}$ and $\ell \in\NN$. We find the first intersection of $\crv{1}$ with any of curves $\crvt{m}{\ell}$ for all $m \in \{2,\ldots,k\}$ and $\ell\in\NN$. Let
\begin{align*}
t_1 &= \min\{t\in[0,1] : (\exists m \in \{2,\ldots,k\})(\exists \ell\in\NN)\,\crv{1}(t) \in \crvt{m}{\ell} \}, \\
M &= \text{any } m \in \{2,\ldots,k\} \text{ such that } (\exists \ell\in\NN)\,\crv{1}(t_1) \in \crvt{m}{\ell}, \\
L &= \text{the unique } \ell\in\NN \text{ such that } \crv{1}(t_1) \in \crvt{M}{\ell}, \\
t_2 &= \text{the unique } t\in[0,1] \text{ such that } \crv{1}(t_1) = \crvt{M}{L}(t).
\end{align*}

\begin{SCfigure}[50]
\includegraphics[width=0.5\textwidth]{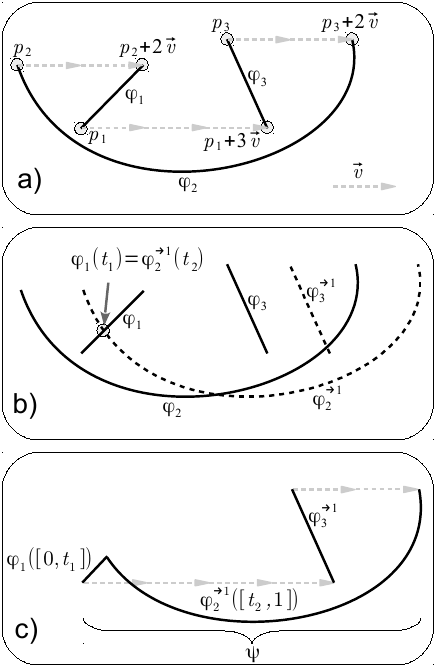}
\caption{\figuresize An example of the proof of Theorem~\ref{t:k-lines} for $k=3$ curves.
\\(a) Three curves, $\crv{1}$, $\crv{2}$, and $\crv{3}$, with start
and end points obeying condition~\ref{cond-start-end-points} and also condition~\ref{cond-no-intersect-other-curves} (the curves violate condition~\ref{cond-no-intersect-translation}, however, as Theorem~\ref{t:k-lines} dictates they must if obeying the other two conditions).
In this case, $n_1 = 3$, $n_2 = 2$, and $n_3 = 2$.
\\(b) Translations of curves $\crv{2}$ and $\crv{3}$ by $\vv$, showing that $\crv{1}$ first intersects $\crvt{2}{1}$, among all positive integer translations of $\crv{2}$ and $\crv{3}$. So in this example, $M=2$ and $L=1$.
\\(c) $\psi$ defined as the concatenation of $\crv{1}([0,t_1])$ with $\crvt{2}{1}([t_2,1])$. $\psi$ and $\crvt{3}{1}$ and are the two curves produced by the proof for the inductive argument.
}
\label{f:example-multiple-curves}
\end{SCfigure}

Since $\crv{1}$ intersects $\crvt{2}{n_2}$ at $p_{2} + n_{2}\vv $ by condition~\ref{cond-start-end-points}, $t_1$, $M$, and $L$ are well-defined.
The uniqueness of $L$ follows by Lemma~\ref{l:one-stripe}.
The uniqueness of $t_2$ follows from the fact that $\crvt{M}{L}$ is simple.

Now define the curve $\psi$ as a concatenation
\begin{align*}
\psi &= \crv{1}([0,t_1]) \oplus \crvt{M}{L}([t_2,1])\\
\noalign{\noindent and consider its shift}
\psi + \vec v &= \crvt{1}{1}([0,t_1]) \oplus \crvt{M}{L+1}([t_2,1]).
\end{align*}

In what follows we will show that points $p_1,p_{M+1}+L\vv,\dots,p_k+L\vv$, integers
$n_1+L,n_{M+1},\dots,n_k$ and curves
$\psi,\crvt{M+1}{L},\dots,\crvt{k}{L}$ form another instance
satisfying conditions~\ref{cond-start-end-points},~\ref{cond-no-intersect-translation}, and~\ref{cond-no-intersect-other-curves}.

Observe that $\psi$ is a curve connecting the point $p_1$ to the point $p_{M+1}+(n_{M+1}+L)\vv$.
It consists of subcurves of two simple curves whose concatenation at the intersection point $\crv{1}(t_1) = \crvt{M}{L}(t_2)$, by the definition of $t_1$, is the first point of intersection between $\crv{1}$ and $\crvt{M}{L}$.
The curve $\crvt{M}{L}$ after that point (i.e., $\crvt{M}{L}((t_2,1])$) therefore cannot intersect $\crv{1}([0,t_1))$, so $\psi$ is simple.
It follows that $\psi$ satisfies condition~\ref{cond-start-end-points}
of the new instance.

We establish that $\psi$ does not intersect its shift by vector $\vv$ by analyzing each of the two parts of $\psi$, $\crv{1}([0,t_1])$ and $\crvt{M}{L}([t_2,1])$, and their translations by $\vv$, separately:
\begin{itemize}
\item $\crv{1}([0,t_1)) \cap \crvt{1}{1}([0,t_1)) = \emptyset$, since $\crv{1} \cap \crvt{1}{1} = \emptyset$ by condition~\ref{cond-no-intersect-translation}.
\item $\crvt{M}{L}([t_2,1]) \cap \crvt{M}{L+1}([t_2,1])=\emptyset$, since
it follows by condition~\ref{cond-no-intersect-translation} that $\crvt{M}{L}\cap \crvt{M}{L + 1} = \emptyset$.
\item $\crv{1}([0,t_1)) \cap \crvt{M}{L+1}([t_2,1]) = \emptyset$,
  since by the definition of $t_1$ (in particular, the fact that it is
  the minimum element of the set defining it), $\crv{1}([0,t_1))$ does
  not intersect any  $\crvt{m}{\ell}$, for any $m \geq 2, \ell\in\NN$.
\item $\crvt{M}{L}([t_2,1]) \cap \crvt{1}{1}([0,t_1))=\emptyset$, since otherwise $\crv{1}([0,t_1))$ would intersect $\crvt{M}{L-1}$, violating the definition of $t_1$ similarly to the previous point.
\end{itemize}
This implies that $\psi$ satisfies condition~\ref{cond-no-intersect-translation}.

\begin{figure}[h!]
\centering
\includegraphics{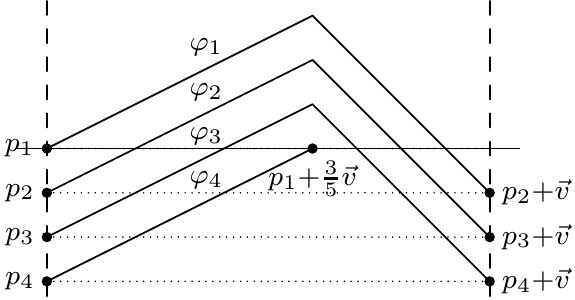}
\caption{\figuresize
An example of four curves $\varphi_{1},\dots,\varphi_{4}$ that satisfy the conditions of Theorem~\ref{t:k-lines}, except that $n_{1} = \frac{3}{5}$ is not an integer.}
\label{f:non-integer-k}
\end{figure}

We have $\crvt{i}{L} \cap \psi = \emptyset$ for every $i>M$, since
$\crvt{i}{L}$ cannot intersect $\crv{1}([0,t_1))$ (by definition of
$t_{1}$) and $\crvt{i}{L} \cap \crvt{M}{L} = \emptyset$ by condition~\ref{cond-no-intersect-other-curves}.
This implies that $\psi$ satisfies
condition~\ref{cond-no-intersect-other-curves} of the new instance.

Thus, the new instance with points $p_1,p_{M+1}+L\vv,\dots,p_k+L\vv$,
integers $n_1+L,n_{M+1},\dots,n_k$ and curves
$\psi,\crvt{M+1}{L},\dots,\crvt{k}{L}$ satisfy
conditions~\ref{cond-start-end-points},~\ref{cond-no-intersect-translation},
and~\ref{cond-no-intersect-other-curves}. In addition, it has a
smaller number of curves ($k+1-M=k'<k$), and hence, using the induction hypothesis we have a contradiction.
\end{proof}


The example in Figure~\ref{f:non-integer-k} shows that the theorem does not hold if we allow just one of the numbers $n_{1},\dots,n_{k}$ to be a non-integer.

%

\subsection{A theorem about shapes}
\label{subsec:shapes}

\begin{figure}[h!]
\centering
\includegraphics[width=\textwidth]{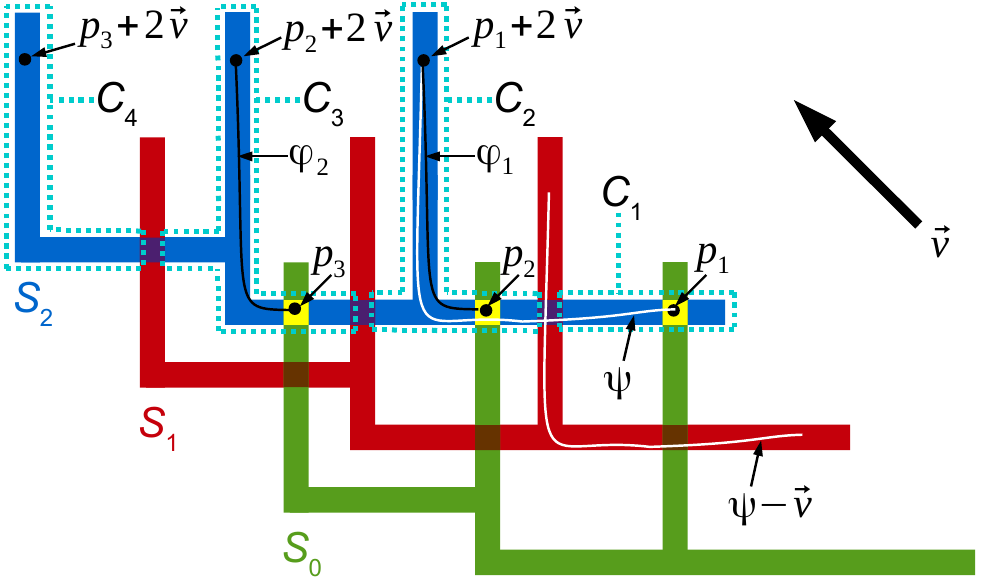}
\caption{\figuresize An example of a shape $S_0$ and its two translations. Starting at $p_1 \in (S_2 \cap S_0) \setminus S_1$, we repeat the following procedure: from point $p_i$ in connected component $C_i$ of $S_2 \setminus S_1$, jump to point $p_i + 2\vv$, which is guaranteed to be in a different connected component $C_{i+1}$ of $S_2 \setminus S_1$ from $p_i$ (see proof of Theorem~\ref{thm:shapes} to see why this is implied by Corollary~\ref{cor:one-line}). If $C_{i+1}$ intersects $S_0$ at point $p_{i+1}$, then there is a curve $\varphi_i$ in $S_2 \setminus S_1$ from $p_i + 2\vv$ to $p_{i+1}$, and jumping to point $p_{i+1} + 2\vv$ takes us to yet another connected component $C_{i+2} \neq C_{i+1}$. Repeating this must eventually result in a connected component (in this example, $C_4$) that does not intersect $S_0$, or else the curves $\varphi_i$ would contradict Theorem~\ref{t:k-lines}.}
\label{fig:shape-example}
\end{figure}

Theorem~\ref{t:k-lines} gives rise to the following geometrical theorem about discrete shapes, which is the main technical tool to prove our main self-assembly result, Theorem~\ref{thm:main}.
We define a \emph{shape} to be a finite, connected subset of $\Z^2$.

\begin{theorem}\label{thm:shapes}
  Let $S_0 \subset \Z^2$ be a shape, and let $\vv \in \Z^2$ be a nonzero vector.
  Let $S_1 = S_0 + \vv$ and $S_2 = S_1 + \vv$.
  Then there is a connected component of $S_2 \setminus S_1$ that does not intersect $S_0$.
\end{theorem}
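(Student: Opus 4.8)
The plan is to argue by contradiction: assume every connected component of $S_2\setminus S_1$ meets $S_0$, and build from this a family of curves forbidden by Theorem~\ref{t:k-lines}. First I would record two preliminaries. (a)~$S_2\setminus S_1\neq\emptyset$: if $S_2=S_1$ then $S_0+2\vv=S_0+\vv$, hence $S_0=S_0+\vv=S_0+2\vv=\cdots$, impossible for a nonempty finite set, and since $|S_2|=|S_1|$ we get $S_2\not\subseteq S_1$. (b)~For $X\subseteq\Z^2$, identify a simple path in the grid graph on $X$ with a \emph{lattice curve in $X$}: a simple polygonal arc built from unit horizontal/vertical segments with lattice endpoints, all lying in $X$. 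Any two lattice curves meet, as subsets of $\R^2$, only at common lattice points; consequently, lattice curves lying in disjoint subsets of $\Z^2$ are disjoint, and if $\varphi$ is a lattice curve in $S_2\setminus S_1$ then $\varphi+\vv$ is a lattice curve in $(S_2+\vv)\setminus S_2$, which is disjoint from $S_2\supseteq\varphi$, so $\varphi\cap(\varphi+\vv)=\emptyset$. Thus any lattice curve lying inside $S_2\setminus S_1$ automatically satisfies condition~\ref{cond-no-intersect-translation} of Theorem~\ref{t:k-lines}, and two lattice curves lying in distinct components of $S_2\setminus S_1$ automatically satisfy condition~\ref{cond-no-intersect-other-curves}; the only real work is to produce a cyclic chain of such curves obeying the endpoint pattern of condition~\ref{cond-start-end-points}.

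The engine is a \emph{jump} step. Given a component $C$ of $S_2\setminus S_1$ with a point $p\in C\cap S_0$, let $k\geq 0$ be largest with $p+k\vv\in S_0$ (finite, since $S_0$ is finite and $\vv\neq 0$) and set $n:=k+2$. Then $p+(n-2)\vv=p+k\vv\in S_0$ forces $p+n\vv\in S_0+2\vv=S_2$, while $p+(n-1)\vv=p+(k+1)\vv\notin S_0$ forces $p+n\vv\notin S_0+\vv=S_1$; also $p+n\vv=p+(k+2)\vv\notin S_0$. Hence $p+n\vv\in S_2\setminus S_1$ with $p+n\vv\notin S_0$. I then claim $p+n\vv$ lies in a component of $S_2\setminus S_1$ \emph{different} from $C$: since $p,p+n\vv\in S_2$ and $S_2$ is connected, pick any simple lattice curve $q$ in $S_2$ from $p$ to $p+n\vv$; applying Corollary~\ref{cor:one-line} to $q-\vv$ (a simple curve joining $p-\vv$ and $(p-\vv)+n\vv$, with $n\geq 1$) shows $(q-\vv)\cap q\neq\emptyset$, and this intersection point, being a common lattice point of two lattice curves, lies in $(S_2-\vv)\cap S_2=S_1\cap S_2$ and on $q$. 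So every simple lattice curve in $S_2$ from $p$ to $p+n\vv$ meets $S_1$; therefore no path inside $S_2\setminus S_1$ joins $p$ to $p+n\vv$, and since both endpoints lie in $S_2\setminus S_1$, they lie in different components.

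Now I would walk through components. Start with any component $C_1$ (nonempty by~(a)) and pick $p_1\in C_1\cap S_0$ (possible by the contradiction hypothesis). Given $p_i\in C_i\cap S_0$, the jump step produces $n_i\geq 2$ and a component $C_{i+1}\neq C_i$ containing $p_i+n_i\vv$; pick $p_{i+1}\in C_{i+1}\cap S_0$, and since $C_{i+1}$ is connected take a simple lattice curve $\psi_i$ in $C_{i+1}$ from $p_i+n_i\vv$ to $p_{i+1}$, which is nondegenerate since $p_i+n_i\vv\notin S_0\ni p_{i+1}$. There are finitely many components, so let $j$ be least with $C_j=C_m$ for some $m<j$; then $C_m,\dots,C_{j-1}$ are pairwise distinct and $j\geq m+2$. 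Replace $\psi_{j-1}$ by a simple lattice curve $\psi'_{j-1}$ in $C_m=C_j$ from $p_{j-1}+n_{j-1}\vv$ to $p_m$ (again nondegenerate). The $j-m$ curves $\psi_m,\dots,\psi_{j-2},\psi'_{j-1}$ lie one apiece in the distinct components $C_{m+1},\dots,C_{j-1},C_m$; reversing them and reindexing around the resulting cycle, together with the points $p_m,\dots,p_{j-1}$ and integers $n_m,\dots,n_{j-1}\in\Z^+$, they realize condition~\ref{cond-start-end-points} (by construction, each reversed curve ends at the next one's start point plus a positive multiple of $\vv$), while conditions~\ref{cond-no-intersect-translation} and~\ref{cond-no-intersect-other-curves} hold for free by preliminary~(b). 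This contradicts Theorem~\ref{t:k-lines}, proving the theorem.

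The genuine obstacle I anticipate is precisely the jump step, and within it, keeping the target of the jump inside $S_2\setminus S_1$: a naive jump by $2\vv$ can fail, because $p_i+\vv$ may lie in $S_0$ (putting $p_i+2\vv$ back into $S_1$), and this cannot be ruled out locally even under the contradiction hypothesis — this is exactly why the statement is not obvious. Choosing the jump length so that $p_i+(n_i-2)\vv$ is the \emph{last} point of $S_0$ on the forward $\vv$-ray from $p_i$ repairs this and, conveniently, makes $n_i$ a positive integer as Theorem~\ref{t:k-lines} requires. The remaining ingredients — extracting a simple lattice curve from a path, the fact that lattice curves interact only at lattice points, and the cyclic bookkeeping of endpoints after reversal — are routine.
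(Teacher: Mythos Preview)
Your proof is correct and follows the same strategy as the paper: contradiction, a jump step sending $p_i\in C_i\cap S_0$ to $p_i+n_i\vv$ in a different component via Corollary~\ref{cor:one-line}, extraction of a cycle among components, and application of Theorem~\ref{t:k-lines} to the resulting lattice curves. The only substantive difference is the choice of $n_i$ --- the paper takes the least $n_i\geq 1$ with $p_i+n_i\vv\notin S_1$, while you take $n_i=k+2$ with $k$ maximal so that $p_i+k\vv\in S_0$; your version has the bonus of forcing $p_i+n_i\vv\notin S_0$, which you use to guarantee the curves $\psi_i$ are nondegenerate, a detail the paper's write-up glosses over.
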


\begin{proof}

  We first sketch an informal intuition of the proof, shown by example in Figure~\ref{fig:shape-example}.
  The argument is constructive: it shows a way to iterate through some connected components of $S_2 \setminus S_1$ to actually find one that does not intersect $S_0$.
  
  Start with component $C_1$, and suppose it intersects $S_0$ at point $p_1 \in C_1 \cap S_0$.
  Then $p_1 + 2\vv \in S_2$ since $p_1 \in S_0$.\footnote{In this example $p_1 + 2\vv \not \in S_1$; in the full argument we consider $p_1 + n \vv$ for $n \in \Z^+$ large enough to ensure this.}
  Let $\psi$ be a path (simple curve) from $p_1$ to $p_1 + 2\vv$ lying entirely within $S_2$.
  Corollary~\ref{cor:one-line} implies that $\psi$ intersects $\psi - \vv$, which is a curve lying entirely within $S_1$.
  In other words, every path from $p_1$ to $p_1 + 2\vv$ lying inside $S_2$ hits $S_1$, i.e., $p_1 + 2\vv$ and $p_1$ are in different connected components of $S_2 \setminus S_1$.
  We call $C_2 \neq C_1$ the connected component of $p_1 + 2\vv$.
  Suppose $C_2$ also intersects $S_0$; then there is some curve $\varphi_1$ lying entirely within $S_2 \setminus S_1$ and going from $p_1 + 2\vv$ to this new point $p_2 \in C_2 \cap S_0$.
  Repeating the previous argument, $p_2 + 2 \vv$ must be in a different connected component $C_3 \neq C_2$, and if $C_3$ also intersects $S_0$, then there is another curve $\varphi_2 \subset C_3$ from $p_2 + 2\vv$ to $p_3 \in C_3 \cap S_0$.
  In this example, we iterate this one more time and find that connected component $C_4 \subset S_2 \setminus S_1$ does not intersect $S_0$.
  
  For the sake of contradiction, suppose that we fail to find such a connected component, i.e., every one of the connected components $C_1,\ldots,C_k$ of $S_2 \setminus S_1$ intersects $S_0$.
  Then eventually the above described procedure cycles back to a previously visited connected component, and the curves $\varphi_j$ contained in $S_2 \setminus S_1$ satisfy condition~\ref{cond-start-end-points} of Theorem~\ref{t:k-lines}.
  Since each $\varphi_i \in S_2 \setminus S_1$, we have $\varphi_i + \vv \in S_3 \setminus S_2$, hence $\varphi_i \cap (\varphi_i + \vv) = \emptyset$ for all $1 \leq i \leq k$, so they satisfy condition~\ref{cond-no-intersect-translation}.
  Since each curve lies in a different connected component of $S_2 \setminus S_1$, they do not intersect each other, satisfying condition~\ref{cond-no-intersect-other-curves}, a contradiction.
  
  More formally, consider connected components of $S_{2}\setminus S_{1}$, say $C_{1},\dots,C_{k}$, for some $k\ge 1$.
  We say that $C_{i}$ is \emph{non-conflicting} if $C_{i}\cap S_{0} = \emptyset $. We will
  show that there is a non-conflicting $C_{i}$.
  Assume for the sake of contradiction that for every $i = 1,\dots,k$, $C_{i}\cap S_{0}\ne \emptyset $ and let $p_{i}\in C_{i}\cap S_{0}$.
  Note that $p_{i} + \vec v\in  S_{1}$.
  Let $n_{i}$ be the smallest positive integer such that $p_{i} + n_{i}\vec v\notin S_{1}$ (since $S_{1}$ is finite, such an $n_{i}$ must exist).
  Since $p_{i} + (n_{i} - 1)\vec v\in S_{1}$, we have $p_{i} + n_{i}\vec v\in S_{2}\setminus S_{1}$.
  Hence, $p_{i} + n_{i}\vec v$ belongs to some connected component of $S_2 \setminus S_1$. Both $p_i$ and $p_i + n_i \vv$ are in $S_2$, but by Corollary~\ref{cor:one-line}, any path within $S_2$ connecting them must intersect its translation by $-\vv$, which is a path in $S_1$, so $p_i + n_i \vv$ must be in a different connected component than $C_i$.
  We call this connected component $C_{i+1}$.\footnote{Assuming we do this for every point $p_i$, at some point we must cycle back to a connected component already visited. It may not be that this cycle contains all connected components of $S_2 \setminus S_1$, but in this case we consider $C_1,\ldots,C_k$ to be not every connected component of $S_2 \setminus S_1$, but merely those encountered in the cycle, so that for the sake of notational convenience we can assume that $C_1,\ldots,C_k$ are all encountered, and indexed by the order in which they are encountered.}

  Consider a simple curve (a self-avoiding path in the lattice) $\varphi_{i}$ from $p_{i}$ to $p_{i + 1} + n_{i + 1}\vec v$ in $C_{i}\subseteq S_{2}\setminus S_{1}$.
  Since these paths lie in different connected components they do not intersect. Furthermore, since $\varphi_{i} + \vec v \subset S_{3}\setminus S_{2}$, it does not intersect $\varphi_{i} \subset S_2$. But these curves contradict Theorem~\ref{t:k-lines}.
\end{proof}

\subsection{Implication for self-assembly}
\label{subsec:assembly}

In this section we use Theorem~\ref{thm:shapes} to prove our main theorem, Theorem~\ref{thm:main}.
We require the following theorem from~\cite{DotyProducibility}. 
We say that two overlapping assemblies $\alpha$ and $\beta$ are \emph{consistent} if $\alpha(p)=\beta(p)$ for every $p\in\dom\alpha\cap\dom\beta$.
If $\alpha$ and $\beta$ are consistent, define their \emph{union} $\alpha \cup \beta$ to be the assembly with $\dom (\alpha \cup \beta) = \dom \alpha \cup \dom \beta$ defined by $(\alpha \cup \beta)(p) = \alpha(p)$ if $p \in \dom \alpha$ and $(\alpha \cup \beta)(p) = \beta(p)$ if $p \in \dom \beta$.
Let $\alpha \cup \beta$ be undefined if $\alpha$ and $\beta$ are not consistent.

\begin{theorem}[\cite{DotyProducibility}]\label{thm-union-producible}
  If $\alpha $ and $\beta $ are $\mathcal{T}$-producible assemblies
  that are consistent and overlapping, then $\alpha \cup \beta$
  is $\mathcal{T}$-producible. Furthermore, it is possible to assemble
  first $\alpha $ and then assemble the missing portions of $\beta $,
  i.e., $\beta {\restriction_{C_{1}}},\dots,\beta
  {\restriction_{C_{k}}}$, where $C_{1},\dots,C_{k}$ are connected
  components of 
  $\dom \beta \setminus \dom \alpha.$
\end{theorem}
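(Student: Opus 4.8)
The plan is to induct on the structure of a fixed assembly tree $\mathcal{A}_\beta$ of $\beta$: the binary tree witnessing producibility of $\beta$, whose leaves are single-tile assemblies and whose internal nodes are $\tau$-stable, overlap-free unions of their children, so that each node carries a producible subassembly $\beta_v\sqsubseteq\beta$ (positioned inside $\beta$'s coordinate frame). I will use freely two standard ingredients: every producible assembly is $\tau$-stable (and, assuming $\tau\ge1$, has connected binding graph); and the \emph{combination lemma} --- if $X,Y$ are $\tau$-stable with $\dom X\cap\dom Y=\emptyset$ and the interacting glues between $\dom X$ and $\dom Y$ total strength at least $\tau$, then $X\cup Y$ is $\tau$-stable --- which holds because any cut of the binding graph of $X\cup Y$ either restricts to a nonempty cut of the binding graph of $X$ or of $Y$ (strength $\ge\tau$ by stability of that piece), or else puts all of $\dom X$ on one side and all of $\dom Y$ on the other (strength at least the inter-piece total $\ge\tau$).

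First I would strengthen the statement so it can be fed back into the induction: \emph{for every producible $\alpha$ consistent with $\beta$ with $\dom\alpha\cap\dom\beta\ne\emptyset$, the assembly $\alpha\cup\beta$ is producible, and in fact $\alpha\to\alpha\cup\beta$ by finitely many attachments, each attaching some producible subassembly $\beta_v$ of $\mathcal{A}_\beta$ that is disjoint from $\dom\alpha$.} Its hypotheses are exactly ``producible, consistent, overlapping'', so this yields producibility of $\alpha\cup\beta$ and, as the last paragraph shows, the ``furthermore'' claim. The base case ($\beta$ a single tile) is immediate: the tile's position lies in $\dom\alpha$, consistency pins its type, and $\alpha\cup\beta=\alpha$. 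For the inductive step, write $\beta=\beta_L\cup\beta_R$ for the subassemblies at the root's children, so $\dom\beta_L\cap\dom\beta_R=\emptyset$ and the interacting glues between $\dom\beta_L$ and $\dom\beta_R$ total $\ge\tau$; both $\beta_L,\beta_R$ are subassemblies of $\beta$ hence consistent with $\alpha$, and since $\dom\alpha\cap\dom\beta\ne\emptyset$ we may assume, relabelling if needed, $\dom\alpha\cap\dom\beta_L\ne\emptyset$. Apply the inductive hypothesis to $(\alpha,\beta_L)$ to obtain a producible $\gamma:=\alpha\cup\beta_L$ with $\alpha\to\gamma$ of the required form; note $\gamma$ restricted to $\dom\beta_L$ is exactly $\beta_L$ (by consistency where $\gamma=\alpha$, by construction elsewhere), and $\gamma$ is consistent with $\beta_R$ (any shared point lies in $\dom\alpha\cap\dom\beta_R$, since $\dom\beta_L\cap\dom\beta_R=\emptyset$, where $\gamma,\alpha,\beta,\beta_R$ all coincide). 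If $\dom\alpha\cap\dom\beta_R\ne\emptyset$, the inductive hypothesis applied to $(\gamma,\beta_R)$ extends the growth to $\gamma\cup\beta_R=\alpha\cup\beta$, and transitivity of $\to$ finishes. Otherwise $\dom\gamma\cap\dom\beta_R=\emptyset$, and we attach $\beta_R$ whole to $\gamma$: there is no overlap, and because $\gamma$ carries an exact copy of $\beta_L$, every interacting glue that held $\beta_R$ into $\beta$ is still present between $\dom\gamma$ and $\dom\beta_R$, so the interaction strength is $\ge\tau$; by the combination lemma $\gamma\cup\beta_R=\alpha\cup\beta$ is $\tau$-stable, hence producible, and $\alpha\to\gamma\to\alpha\cup\beta$. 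I expect this last case to be the delicate one: the reason attaching a child $\beta_R$ that happens to miss $\alpha$ is legitimate is precisely that $\gamma$ contains $\beta_L$ faithfully, so no binding strength is lost.

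Finally I would read off the ``furthermore.'' The attachments built above attach producible assemblies $\beta_{v_1},\dots,\beta_{v_m}$; since consecutive attachments are overlap-free these are pairwise disjoint and disjoint from $\dom\alpha$, and their union is $\dom\beta\setminus\dom\alpha$, so they partition it. Each $\dom\beta_{v_j}$ is connected ($\beta_{v_j}$ being $\tau$-stable) and contained in $\dom\beta\setminus\dom\alpha$, hence lies inside a single connected component $C_i$ of $\dom\beta\setminus\dom\alpha$; moreover a position of $C_i$ has no interacting glue outside $\dom\alpha\cup C_i$, as its only grid-neighbours outside $C_i$ lie in $\dom\alpha$. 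So a piece inside $C_i$ can bind only to $\dom\alpha$ and to other pieces of $C_i$, and therefore the attachment sequence can be reordered to place all pieces of $C_1$ first (in their original relative order), then all of $C_2$, and so on: each piece, when attached, already sees $\dom\alpha$ and every earlier piece of its own component --- everything it could bind to --- so its interaction strength, and hence validity, is unchanged, and no overlap is introduced. This displays $\alpha\to\alpha\cup\beta$ as: build $\alpha$, then fill $\beta{\restriction_{C_1}}$, then $\beta{\restriction_{C_2}}$, and so on in any prescribed order of the components, which is the asserted conclusion. If one wants the sharper reading in which each $\beta{\restriction_{C_i}}$ is a stand-alone producible assembly attached to $\alpha$ in one step, the only remaining gap is that $\beta{\restriction_{C_i}}$ is itself producible; since all interacting glues joining $C_i$ to the rest of $\beta$ go to $\dom\alpha$, this would follow from a further induction on $\mathcal{A}_\beta$, and I flag it as the single point beyond the argument above that needs separate care.
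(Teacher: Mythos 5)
The paper itself gives no proof of this statement: it is imported verbatim from \cite{DotyProducibility}, so there is no in-paper argument to compare against; judged on its own terms, your proof is correct and follows the natural route (and, as far as the cited source goes, essentially the same one): structural induction on an assembly tree of $\beta$, recursing into a child that overlaps the current assembly and attaching wholesale a child that does not, the key observation being exactly the one you dwell on --- the intermediate assembly $\gamma=\alpha\cup\beta_L$ contains a faithful copy of $\beta_L$, so the $\beta_L$--$\beta_R$ cut strength $\ge\tau$ survives and the overlap-free attachment of $\beta_R$ is legal. Your handling of the ``furthermore'' clause is also sound: each attached tree node has connected domain, hence lies in a single component $C_i$ of $\dom\beta\setminus\dom\alpha$, and since distinct components are non-adjacent in the lattice, a piece inside $C_i$ can only interact with $\dom\alpha$ and with earlier pieces of the same $C_i$; therefore regrouping the attachments by component (keeping the original relative order inside each component, components in any order) keeps every step valid, and in particular every prefix such as $\alpha\cup\beta{\restriction_{C_i}}$ for one chosen component is producible --- which is precisely the form in which the theorem is used in the proof of Theorem~\ref{thm:main}. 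Your closing caveat is well placed and you were right not to claim the stronger reading: $\beta{\restriction_{C_i}}$ need not be a stand-alone producible assembly attachable in one step, since its internal bonds may rely on tiles at positions in $\dom\alpha\cap\dom\beta$ and it need not even be $\tau$-stable; the weaker ``grow the missing portions onto $\alpha$, component by component'' reading is what the statement is used for, and that is what your argument establishes.
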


\begin{definition}
Let $\alpha+\vv$ denote the translation of $\alpha$ by $\vv$, i.e., an assembly $\beta$ such that $\dom\beta=\dom\alpha+\vv$ and $\beta(p)=\alpha(p-\vv)$ for all $p\in\dom\beta$.
  We say that assembly $\alpha$ is \emph{repetitious} if there exists a nonzero
  vector $\vec v\in \ZZ ^{2}$ such that $\dom \alpha \cap \dom (\alpha  + \vec
  v)\ne \emptyset $ and $\alpha $ and $\alpha  + \vec v$ are consistent.
\end{definition}

Note that Theorem~\ref{thm-union-producible} implies that if a producible assembly $\alpha$ is repetitious with translation vector $\vv$, then $\alpha \cup (\alpha + \vv)$ is also producible.
The following is the main theorem of this paper.

\begin{theorem}\label{thm:main}
  Let $\calT$ be a hierarchical tile assembly system.
  If $\calT$ has a producible repetitious assembly, then arbitrarily large assemblies are producible in $\calT$.
\end{theorem}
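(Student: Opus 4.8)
The plan is to treat Theorem~\ref{thm:main} as a short self-assembly wrapper around the geometric statement Theorem~\ref{thm:shapes}: I would show that from any producible repetitious assembly one can build a \emph{strictly larger} producible repetitious assembly using the \emph{same} translation vector, and then iterate. So suppose $\alpha_0$ is producible and repetitious via a nonzero vector $\vv\in\Z^2$, and write $\alpha_i=\alpha_0+i\vv$ and $S_i=\dom\alpha_i$. I would first record two routine facts: translates of producible assemblies are producible (so every $\alpha_i$ is producible), and consistency of $\alpha_0$ with $\alpha_1$ propagates along the translation direction, i.e.\ $\alpha_i$ is consistent with $\alpha_{i+1}$ for every $i$ (unwinding the definitions, agreement of $\alpha_0$ and $\alpha_0+\vv$ on their common domain forces agreement of $\alpha_0+k\vv$ and $\alpha_0+(k+1)\vv$ on theirs). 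Also, from $S_0\cap S_1\neq\emptyset$ one gets $S_i\cap S_{i+1}\neq\emptyset$ for all $i$, and, since producible assemblies are finite and connected (for $\tau\ge 1$; the degenerate case $\tau=0$ makes arbitrarily large assemblies available outright), $S_0$ is a shape in the sense of Theorem~\ref{thm:shapes}.

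Next I would apply Theorem~\ref{thm:shapes} to $S_0$ and $\vv$ to obtain a connected component $C$ of $S_2\setminus S_1$ with $C\cap S_0=\emptyset$. Since $C\subseteq\dom\alpha_2$, the subassembly $\alpha_2{\restriction_C}$ is well defined, and $C$ is one of the connected components of $\dom\alpha_2\setminus\dom\alpha_1=S_2\setminus S_1$; so, applying Theorem~\ref{thm-union-producible} to the producible, overlapping, consistent pair $\alpha_1,\alpha_2$, the assembly $\delta:=\alpha_1\cup(\alpha_2{\restriction_C})$ is producible (one assembles $\alpha_1$ and then attaches the missing component $\alpha_2{\restriction_C}$). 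Its domain is the disjoint union $S_1\sqcup C$, so $|\dom\delta|=|S_0|+|C|>|S_0|$ because $C$ is nonempty.

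The crux is to check that $\delta$ is again repetitious via $\vv$. Since $\delta+\vv=\alpha_2\cup(\alpha_3{\restriction_{C+\vv}})$, the overlap $\dom\delta\cap\dom(\delta+\vv)$ splits into the four pieces $S_1\cap S_2$, $C\cap S_2$ (which equals $C$), $C\cap(C+\vv)$, and $S_1\cap(C+\vv)$. On the first three, $\delta$ and $\delta+\vv$ agree: on $S_1\cap S_2$ they read $\alpha_1$ and $\alpha_2$, on $C$ they both read $\alpha_2$, and on $C\cap(C+\vv)$ they read $\alpha_2$ and $\alpha_3$ --- all consistent by the propagation fact above. The only dangerous piece is $S_1\cap(C+\vv)$, where $\delta$ reads $\alpha_1$ but $\delta+\vv$ reads $\alpha_3$, and $\alpha_1$ need \emph{not} be consistent with $\alpha_3$; however, $S_1\cap(C+\vv)=(S_0\cap C)+\vv=\emptyset$ precisely because $C\cap S_0=\emptyset$, which is exactly the conclusion of Theorem~\ref{thm:shapes}. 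Hence the overlap is consistent and nonempty (it contains $S_1\cap S_2\neq\emptyset$), so $\delta$ is repetitious.

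To finish, $\delta$ is producible, hence finite and connected, so $\dom\delta$ is again a shape and the construction can be re-run with $\delta$ in place of $\alpha_0$; iterating $n$ times yields a producible assembly of size at least $|S_0|+n$, so $\calT$ produces arbitrarily large assemblies. I expect the only part needing genuine care to be the four-way case split above: one must pinpoint which overlap piece can fail to be consistent and recognize that the disjointness $C\cap S_0=\emptyset$ handed down by Theorem~\ref{thm:shapes} is exactly what forces that piece to be empty. The remaining ingredients (producibility of translates, producibility of unions via Theorem~\ref{thm-union-producible}, propagation of consistency, and connectivity of producible assemblies) are routine.
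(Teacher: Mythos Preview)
Your proposal is correct and follows essentially the same approach as the paper. The only cosmetic difference is that the paper builds the larger assembly as $\alpha' = \alpha_0 \cup (\alpha_1{\restriction_{C-\vv}})$, whereas you build $\delta = \alpha_1 \cup (\alpha_2{\restriction_C})$; these are translates of one another by $\vv$, and your four-way overlap split is exactly the paper's split shifted by $\vv$.
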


\begin{proof}
  It suffices to show that the existence of a producible repetitious assembly $\alpha$ implies the existence of a strictly larger producible repetitious assembly  $\alpha' \sqsupset \alpha$. 
  Let $\alpha$ be a producible repetitious assembly, with $\vv \in \Z^2$ a nonzero vector such that $\alpha$ and $\alpha + \vv$ overlap and are consistent.
  For all $i\in\{0,1,2\}$, let $\alpha _{i} = \alpha + i\vec v$ and $S_{i} = \dom \alpha _{i}$.
  
%
  By Theorem~\ref{thm:shapes}, at least one connected component $C_2 \subseteq S_2 \setminus S_1$ does not intersect $S_0$.
  Define $C_1 = C_2 - \vv$.
  Note that $C_1 \subseteq S_{1}\setminus S_{0}$, which implies, since $C_2 \subseteq S_2 \setminus S_1$, that $C_2 \cap C_1 = \emptyset$. Let $\bar\alpha = \alpha _1
  {\restriction_{C_1}}$. Define $\alpha ' = \alpha \cup
  \bar\alpha $.
  By Theorem~\ref{thm-union-producible}, $\alpha '$ is producible.
  Consider $\dom \alpha '\cap \dom (\alpha ' + \vec v)$; it suffices to show that $\alpha'$ and $\alpha'+\vv$ are consistent on every tile type in this intersection. We have
  \begin{eqnarray*}
    \dom \alpha '\cap \dom (\alpha ' + \vec v)
    &=&
    (S_0 \cup C_1) \cap (S_1  \cup C_2)
    \\&=&
    (S_0 \cap S_1) \cup (S_0 \cap C_2) \cup (C_1 \cap S_1) \cup (C_1 \cap C_2)
    \\&=&
    (S_0 \cap S_1) \cup \emptyset \cup (C_1 \cap S_1) \cup \emptyset
    \\&=&
    (S_0 \cap S_1) \cup C_1,
  \end{eqnarray*}
  By the hypothesis that $\alpha$ is repetitious, $\alpha'$ and $\alpha'+\vv$ are consistent on $S_0 \cap S_1$.
  Since $\bar\alpha \sqsubseteq \alpha+\vv$ and $\dom \bar \alpha = C_1$, this implies that $\alpha'$ and $\alpha'+\vv$ are consistent on $C_1$ as well.
  Hence $\alpha'$ is repetitious.
  Since $C_1 \subseteq S_{1}\setminus S_{0}$ and is nonempty, $|\dom
  \alpha '| > |\dom \alpha |$.
\end{proof}

\paragraph{Acknowledgements.}
The authors are extremely grateful to Jozef Hale\v{s} for the proof of Theorem~\ref{t:k-lines}.
Although Jozef requested not to be a coauthor, that theorem is the keystone of the paper.
The second author is also grateful to David Kirkpatrick, Pierre-\'{E}tienne Meunier, Damien Woods, Shinnosuke Seki, and Andrew Winslow for several insightful discussions. The third author would like to thank Sheung-Hung Poon for useful discussions.

\newpage
\bibliographystyle{plain}
\bibliography{tam}
\newpage

\newpage
\appendix
\section{Appendix}

\subsection{Formal definition of the hierarchical tile assembly model}
\label{sec-tam-formal}

We will consider the square lattice, i.e., the graph $L_{\square}$ with the vertex set $\mathbb{Z}^2$ and the edge set $\{ (u,v): |u,v| = 1\}$.
The directions $\mathcal{D} = \{N, E, S, W\}$ are used to indicate the natural directions in the lattice.
Formally, they are functions from $\mathbb{Z} \times \mathbb{Z}$ to $\mathbb{Z} \times \mathbb{Z}$: $N(x,y) = (x,y+1),$ $E(x,y) = (x+1,y),$ $S(x,y) = (x,y-1),$ and $W(x,y) = (x-1,y)$.
Note that $-E = W$ and $-N = S$.

Informally, a tile is a square with the north, east, south, and west edges labeled from some finite alphabet $\Sigma$ of \emph{glues}.
Formally, a tile $t$ is a 4-tuple $(g_N, g_E, g_S, g_W) \in \Sigma^4$, indicating the glues on the north, east, south, and west side, respectively.
%
%
Each pair of glues $g$ and $g'$ is associated with a nonnegative integer $str(g,g')$ called the {\em interaction strength}. 

An \textit{assembly} on a set of tiles $T$ is a partial map $\alpha: \mathbb{Z}^2 \dashrightarrow T$ such that the subgraph of $L_{\square}$ induced by the domain of $\alpha$, denoted by $L_{\square}[\dom \alpha]$, is connected.
The \emph{weighted subgraph induced by $\alpha$}, denoted by $L_{\square}[\alpha]$, is $L_{\square}[\dom \alpha]$ in which every edge $pq$ has weight equal to the interaction strength of the glues on the abutting sides of tiles at positions $p$ and $q$, respectively, i.e., $str(\alpha(p)_d, \alpha(q)_{-d})$ where $d = q-p$.
Given a positive integer $\tau \in \mathbb{Z}^+$, called a \emph{temperature}, a set of edges of $L_{\square}[\alpha ]$ of an assembly $\alpha $ is \emph{$\tau$-stable} if the sum of the weights of edges in this set is at least $\tau$, and assembly $\alpha$ is \emph{$\tau$-stable}
if every edge cut of $L_{\square}[\alpha]$ is $\tau$-stable.

A {\em hierarchical tile assembly system} (hierarchical TAS) is a triple $\mathcal{T} = (T,\tau ,str)$, where $T$ is a finite set of tile types, $\tau \in \mathbb{Z}^{+}$ and $str:\: \Sigma \times \Sigma \to \NN $ is the interaction strength function.
Let $\alpha ,\beta :\: \ZZ^{2} \to T$ be two assemblies. 
We say that $\alpha $ and $\beta $ are \emph{nonoverlapping} if $\dom \alpha \cap \dom \beta = \emptyset$.
Two assemblies $\alpha $ and $\beta $ are \emph{consistent} if $\alpha(p) = \beta (p)$ for all $p\in \dom \alpha \cap \dom \beta $.
If $\alpha $ and $\beta $ are consistent assemblies, define the assembly $\alpha \cup \beta $ in a natural way, i.e., $\dom(\alpha \cup \beta ) = \dom \alpha \cup \dom \beta $ and $(\alpha \cup \beta )(p) = \alpha (p)$ for $p\in \dom \alpha $ and $(\alpha \cup \beta )(p) = \beta (p)$ for $p\in \dom \beta $.
If $\alpha$ and $\beta$ are nonoverlapping, the \emph{cut of the union $\alpha \cup \beta $} is the set of edges of $L_{\square }$ with one end-point in $\dom \alpha $ and the other end-point in $\dom \beta$.
An assembly $\gamma $ is \emph{singular} if $|\dom \gamma| = 1$.
We say that an assembly $\gamma $ is \emph{$\mathcal{T}$-producible} if either $\gamma $ is singular or there exist $\mathcal{T}$-producible nonoverlapping assemblies $\alpha$ and $\beta$ such that $\gamma = \alpha \cup \beta$ and
the cut of $\alpha \cup \beta $ is $\tau$-stable.
In the latter case, we write $\alpha + \beta \to_{1}^{\mathcal{T}} \gamma$.
Note that every $\mathcal{T}$-producible assembly is $\tau$-stable.
A $\mathcal{T}$-producible assembly $\alpha$ is \emph{$\mathcal{T}$-terminal} if there are no $\mathcal{T}$-producible assemblies $\beta$ and $\gamma$ such that $\alpha + \beta
\to_{1}^{\mathcal{T}} \gamma$.
We say two assemblies $\alpha$ and $\beta$ are \emph{equivalent up to translation}, written $\alpha \simeq \beta $, if there is a vector $\vec x\in \ZZ^{2}$ such that $\dom \alpha = \dom \beta + \vec x$ and for all $p\in \dom \beta$, $\alpha (p + \vec x) = \beta (p)$.
We say that $\mathcal{T}$ \emph{uniquely produces} $\alpha$ if $\alpha$ is $\mathcal{T}$-terminal and for every $\mathcal{T}$-terminal assembly $\beta$, $\alpha \simeq \beta$.

A \emph{restriction} of an assembly $\alpha$ to a set $D\subseteq \dom \alpha$, denoted by $\alpha{\restriction_{D}}$, is $\dom \alpha{\restriction_{D}} = D$ and for every $p \in D$, $\alpha{\restriction_{D}}(p) = \alpha (p)$. 
If $C$ is a subgraph of $L_{\square }$ such that $V(C)\subseteq \dom \alpha $, we define $\alpha{\restriction_{C}} = \alpha{\restriction_{V(C)}}$.

When $\calT$ is clear from context, we may omit $\calT$ from the notation above and instead write
$\to_1$,
$\to$,
\emph{produces},
\emph{producible}, and
\emph{terminal}.


\end{document}